\definecolor{DarkGray}{rgb}{0.1,0.1,0.5}
\newcommand{\ket}[1]{{|#1\rangle}}
\newcommand{\identity}{\ensuremath{\boldsymbol{1}}} 
\def\CZ {C\!Z}	
\newcounter{sprows}
\newlength{\spheight}
\newlength{\spraise}
\newcommand{\comment}[1]{\emph{\color{blue}Comment:\color{black} #1}} 
\newlength{\commentslength}
\newcommand{\comments}[1]{
\hspace{-2\parindent}
\addtolength{\commentslength}{-\commentslength}
\addtolength{\commentslength}{\linewidth}
\addtolength{\commentslength}{-\parindent}
\fcolorbox{blue}{white}{\smallskip\begin{minipage}[c]{\commentslength}
\emph{Comments:}\begin{itemize}#1\end{itemize}\end{minipage}}\bigskip
}
\newcommand{\rem}[1]{}
\newtheorem{theorem}{Theorem}
\newtheorem{claim}[theorem]{Claim}
\newfont{\subsubsecfnt}{ptmri8t at 11pt}
\renewcommand{\subparagraph}[1]{\smallskip{\subsubsecfnt #1.}}
\newcommand{\eqnref}[1]{\hyperref[#1]{{(\ref*{#1})}}}
\newcommand{\thmref}[1]{\hyperref[#1]{{Theorem~\ref*{#1}}}}
\newcommand{\lemref}[1]{\hyperref[#1]{{Lemma~\ref*{#1}}}}
\newcommand{\corref}[1]{\hyperref[#1]{{Corollary~\ref*{#1}}}}
\newcommand{\defref}[1]{\hyperref[#1]{{Definition~\ref*{#1}}}}
\newcommand{\secref}[1]{\hyperref[#1]{{Sec.~\ref*{#1}}}}
\newcommand{\figref}[1]{\hyperref[#1]{{Fig.~\ref*{#1}}}}  
\newcommand{\tabref}[1]{\hyperref[#1]{{Table~\ref*{#1}}}}
\newcommand{\remref}[1]{\hyperref[#1]{{Remark~\ref*{#1}}}}
\newcommand{\appref}[1]{\hyperref[#1]{{Appendix~\ref*{#1}}}}
\newcommand{\claimref}[1]{\hyperref[#1]{{Claim~\ref*{#1}}}}
\newcommand{\factref}[1]{\hyperref[#1]{{Fact~\ref*{#1}}}}
\newcommand{\propref}[1]{\hyperref[#1]{{Proposition~\ref*{#1}}}}
\newcommand{\exampleref}[1]{\hyperref[#1]{{Example~\ref*{#1}}}}
\newcommand{\conjref}[1]{\hyperref[#1]{{Conjecture~\ref*{#1}}}}
\def\COLOR{}
\definecolor{Cayenne}{rgb}{0.5,0,0}
\definecolor{Midnight}{rgb}{0,0,0.5}
\definecolor{Plum}{rgb}{0.5,0,0.5}
\definecolor{Teal}{rgb}{0,0.5,0.5}
\definecolor{Clover}{rgb}{0,0.5,0}
\definecolor{Maroon}{rgb}{0.5,0,0.25}
\definecolor{Ocean}{rgb}{0,0.25,0.5}
\definecolor{Tangerine}{rgb}{1,0.5,0}
\definecolor{Strawberry}{rgb}{1,0,0.5}
\definecolor{Fern}{rgb}{0.25,0.5,0}
\definecolor{Aqua}{rgb}{0,0.5,1}
\definecolor{Moss}{rgb}{0,0.5,0.25}
\definecolor{Mocha}{rgb}{0.5,0.25,0}
\definecolor{Lemon}{rgb}{1,1,0}
\definecolor{Asparagus}{rgb}{0.5,0.5,0}
\definecolor{Grape}{rgb}{0.5,0,1}
\definecolor{Iron}{rgb}{.3,.3,.3}
\definecolor{Steel}{rgb}{.4,.4,.4}
\let\save@mathaccent\mathaccent
\newcommand*\if@single[3]{%
  \setbox0\hbox{${\mathaccent"0362{#1}}^H$}%
  \setbox2\hbox{${\mathaccent"0362{\kern0pt#1}}^H$}%
  \ifdim\ht0=\ht2 #3\else #2\fi
  }
\newcommand*\rel@kern[1]{\kern#1\dimexpr\macc@kerna}
\newcommand*\widebar[1]{\@ifnextchar^{{\wide@bar{#1}{0}}}{\wide@bar{#1}{1}}}
\newcommand*\wide@bar[2]{\if@single{#1}{\wide@bar@{#1}{#2}{1}}{\wide@bar@{#1}{#2}{2}}}
\newcommand*\wide@bar@[3]{%
  \begingroup
  \def\mathaccent##1##2{%
    \let\mathaccent\save@mathaccent
    \if#32 \let\macc@nucleus\first@char \fi
    \setbox\z@\hbox{$\macc@style{\macc@nucleus}_{}$}%
    \setbox\tw@\hbox{$\macc@style{\macc@nucleus}{}_{}$}%
    \dimen@\wd\tw@
    \advance\dimen@-\wd\z@
    \divide\dimen@ 3
    \@tempdima\wd\tw@
    \advance\@tempdima-\scriptspace
    \divide\@tempdima 10
    \advance\dimen@-\@tempdima
    \ifdim\dimen@>\z@ \dimen@0pt\fi
    \rel@kern{0.6}\kern-\dimen@
    \if#31
      \overline{\rel@kern{-0.6}\kern\dimen@\macc@nucleus\rel@kern{0.4}\kern\dimen@}%
      \advance\dimen@0.4\dimexpr\macc@kerna
      \let\final@kern#2%
      \ifdim\dimen@<\z@ \let\final@kern1\fi
      \if\final@kern1 \kern-\dimen@\fi
    \else
      \overline{\rel@kern{-0.6}\kern\dimen@#1}%
    \fi
  }%
  \macc@depth\@ne
  \let\math@bgroup\@empty \let\math@egroup\macc@set@skewchar
  \mathsurround\z@ \frozen@everymath{\mathgroup\macc@group\relax}%
  \macc@set@skewchar\relax
  \let\mathaccentV\macc@nested@a
  \if#31
    \macc@nested@a\relax111{#1}%
  \else
    \def\gobble@till@marker##1\endmarker{}%
    \futurelet\first@char\gobble@till@marker#1\endmarker
    \ifcat\noexpand\first@char A\else
      \def\first@char{}%
    \fi
    \macc@nested@a\relax111{\first@char}%
  \fi
  \endgroup
}
\def\beq{\begin{equation}}
\def\eeq{\end{equation}}
\def\llbracket{{[\![}}
\def\rrbracket{{]\!]}}
\renewcommand{\comment}[1]{}\renewcommand{\comments}[1]{}
\begin{document}
\def\compilefullpaper{}

\title{Quantum error correction with only two extra qubits}
\author{Rui Chao}
\author{Ben W. Reichardt}
\affiliation{University of Southern California}

\begin{abstract}
Noise rates in quantum computing experiments have dropped dramatically, but reliable qubits remain precious.  Fault-tolerance schemes with minimal qubit overhead are therefore essential.  We introduce fault-tolerant error-correction procedures that use only two ancilla qubits.  The procedures are based on adding ``flags" to catch the faults that can lead to correlated errors on the data.  They work for various distance-three codes.  

In particular, our scheme allows one to test the $\llbracket 5,1,3 \rrbracket$ code, the smallest error-correcting code, using only seven qubits total.  Our techniques also apply to the $\llbracket 7,1,3 \rrbracket$ and $\llbracket 15,7,3 \rrbracket$ Hamming codes, thus allowing to protect seven encoded qubits on a device with only $17$ physical qubits.  
\end{abstract}

\maketitle

\section{Introduction}

Fault-tolerant quantum computation is possible: quantum computers can tolerate noise and imperfections.  Fault tolerance will be necessary for running quantum algorithms on large-scale quantum computers.  However, fault-tolerance schemes have substantial overhead; many physical qubits are used for each encoded, logical qubit.  This means that on small- and medium-scale devices in the near future, it will be difficult to test fault tolerance, and to explore the efficacy of different fault-tolerance schemes.  We aim to reduce the qubit overhead, especially for small devices.  

Five-qubit systems are enough to test very limited fault-tolerance schemes~\cite{LinkeMonroe16errordetectionexperiment, Gottesman16smallexperiments}.  The $\llbracket 4, 1, 2 \rrbracket$ Bacon-Shor subsystem code encodes one logical qubit into four physical qubits.  A fifth qubit is needed for fault-tolerant error detection.  Since the code has distance two, it can only detect an error, not correct it.  

Until recently, the smallest known scheme that can correct an error used the $\llbracket 9,1,3 \rrbracket$ Bacon-Shor code, plus a tenth ancilla qubit.  Although smaller, more efficient error-correcting codes are known, such as the $\llbracket 5,1,3 \rrbracket$ perfect code (the smallest distance-three code), the $\llbracket 7,1,3 \rrbracket$ Steane code, or the $\llbracket 8,3,3 \rrbracket$ code, fault-tolerance schemes using these codes have required at least as many qubits total.  For example, Shor-style syndrome extraction~\cite{Shor96, DiVincenzoAliferis06slow} requires $w + 1$ or $w$ ancilla qubits, where $w$ is the largest weight of a stabilizer generator.  Steane-style error correction~\cite{Steane97, Steane02, DiVincenzoAliferis06slow} uses at least a full code block of extra qubits, and Knill's scheme~\cite{Knill03erasure} uses an encoded EPR state and thus at least two ancilla code blocks.  

Yoder and Kim~\cite{YoderKim16trianglecodes} have given fault-tolerance schemes using the $\llbracket 5,1,3 \rrbracket$ code with eight qubits total, and the $\llbracket 7,1,3 \rrbracket$ code with nine qubits total.  Extending the latter construction, we introduce fault-tolerance procedures that use only two extra qubits.  Figure~\ref{f:ecqubits} highlights some examples, and \tabref{f:ecancillassummary} compares the qubit overhead of our scheme to the Shor and Stephens-Yoder-Kim methods (described in \appref{s:decodedhalfcatsyndromeextraction}).  In particular, with the $\llbracket 5,1,3 \rrbracket$ code our scheme uses only seven qubits total, or ten qubits with the $\llbracket 8,3,3 \rrbracket$ code.  A particularly promising application is to the $\llbracket 15,7,3 \rrbracket$ Hamming code: $17$ physical qubits suffice to protect seven encoded qubits.  In~\cite{ChaoReichardt17automorphisms}, we give fault-tolerant procedures for applying arbitrary Clifford operations on these encoded qubits, also using only two extra qubits, and fault-tolerant universal operations with four extra qubits, $19$ total.  Substantial fault-tolerance tests can therefore be run on a quantum computer with fewer than twenty qubits.  

Our procedures here are based on adding ``flags" to the syndrome-extraction circuits in order to catch the faults that can lead to correlated errors on the data.  Figure~\ref{f:513flaggedsyndromeextraction} shows an example.  Provided that syndromes are extracted in a careful order, detecting the possible presence of a correlated error is enough to correct it.  

\begin{figure}[b]
\centering
\includegraphics[scale=.455]{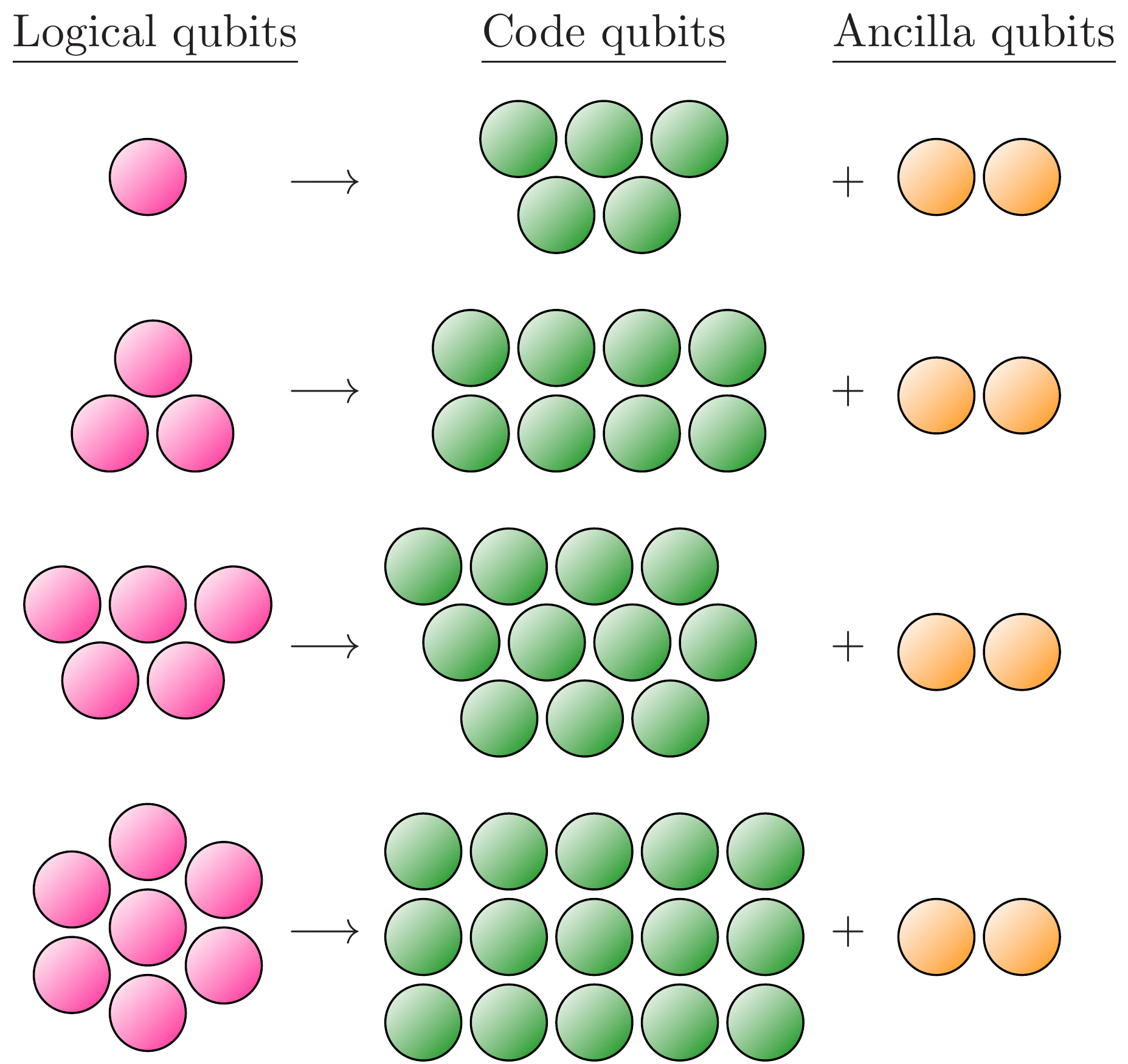}
\caption{Our flagged syndrome-extraction method uses just two ancilla qubits to correct errors fault tolerantly, for a variety of distance-three codes.  For example, seven qubits total are enough to correct errors on one encoded qubit, using the $\llbracket 5,1,3 \rrbracket$ code, and $17$ qubits are enough to protect seven encoded qubits.} \label{f:ecqubits}
\end{figure}

\begin{table}
\begin{center}
\begin{tabular}{c|c@{$\qquad$}c@{$\;\;\;\;$}c}
\hline \hline
& \multicolumn{3}{c}{Ancilla qubits required for} \\
 & {Shor} & {Decoded} & \\[-.1cm]
Code & {cat state} & {half cat} & Flagged \\
\hline
$\llbracket 5,1,3 \rrbracket$ & 5 & 3 & 2 \\
$\!\!\!\!\diamond\!$ $\llbracket 7,1,3 \rrbracket$ & 5 & 3 & 2~\cite{YoderKim16trianglecodes} \hspace{-.55cm} \\
$\llbracket 9,1,3 \rrbracket$ & 1 & -- & -- \\
$\llbracket 8,3,3 \rrbracket$ & 7 & 3 & 2 \\
$\llbracket 10,4,3 \rrbracket$ & 9 & 4 & 2 \\
$\llbracket 11,5,3 \rrbracket$ & 9 & 4 & 2 \\
$\!\!\!\!\diamond\!$ $\llbracket 15,7,3 \rrbracket$ & 9 & 4 & 2 \\
$\!\!\!\!\diamond\!$ $\llbracket 31,21,3 \rrbracket$ & 17 & 8 & 2 \\
$\!\!\!\!\diamond\!$ $\llbracket 2^r - 1, 2^r - 1 - 2r, 3 \rrbracket$ & $2^{r-1}+1$ & $2^{r-2}$ & 2 \\
\hline \hline
\end{tabular}
\end{center}
\caption{Numbers of ancilla qubits required for different fault-tolerant syndrome-extraction methods.  
Shor's method, using a verified cat state, requires $w + 1$ ancilla qubits, where $w$ is the largest weight of a stabilizer generator.  (The $\llbracket 9,1,3 \rrbracket$ Bacon-Shor code is an exception, as it is a subsystem code.)  Stephens, Yoder and Kim~\cite{Stephens14colorcodeft, YoderKim16trianglecodes} have proposed using a cat state on only $\max\{ 3, \lceil w/2 \rceil\}$ qubits, combined with a decoding trick from~\cite{DiVincenzoAliferis06slow}; see \appref{s:decodedhalfcatsyndromeextraction}.  When applicable, our flagged error-correction procedure needs only two extra qubits, as observed for the $\llbracket 7,1,3 \rrbracket$ code by~\cite{YoderKim16trianglecodes}.  Codes marked $\diamond$ are Hamming codes.} \label{f:ecancillassummary}
\end{table}

\newcommand{\circledletter}[1]{\mathop{\text{\textcircled{\raisebox{-.02cm}{$\scriptsize{#1}$}}}}\nolimits}

\begin{figure*}
\centering
\begin{tabular}{c@{$\;\;$}c@{$\!\!\!$}c@{$\;\;$}c}
\subfigure[\label{f:513code}]{\hspace{-.3cm}\raisebox{1.4cm}{
$\begin{array}{r c c c c c}
&X&Z&Z&X&I\\
&I&X&Z&Z&X\\
&X&I&X&Z&Z\\
&Z&X&I&X&Z\\
\cline{2-6}
\widebar X = &X&X&X&X&X\\
\widebar Z = &Z&Z&Z&Z&Z \\[.1cm]
\,
\end{array}
$}}
&
\subfigure[\label{f:513syndromeextraction}]{
\raisebox{0cm}{\includegraphics[scale=.769]{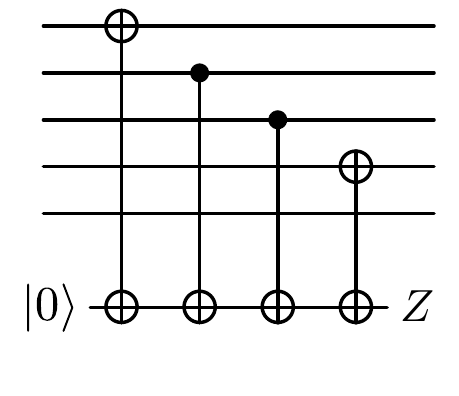}}
}
&
\subfigure[\label{f:513flaggedsyndromeextraction}]{
\raisebox{0cm}{\includegraphics[scale=.769]{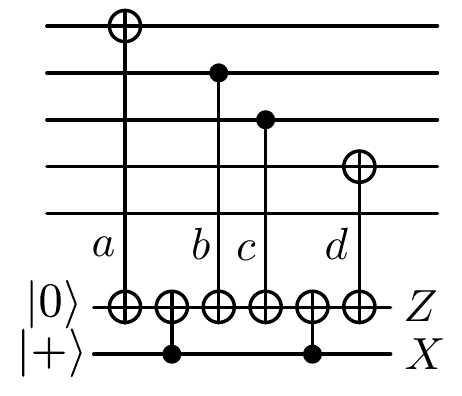}}
}
&
\subfigure[\label{f:513correlatederrors}]{
\raisebox{1.5cm}{
\begin{tabular}{c @{\;\;} c}
\hline \hline
$\circledletter{b}$ failure & Data error  \\
\hline
$IZ$ & $IIZXI$ \\
$XZ$ & $IXZXI$ \\
$YZ$ & $IYZXI$ \\
$ZZ$ & $IZZXI$ \\
\hline \hline \\
\end{tabular}
$\,$
\begin{tabular}{c @{\;\;} c}
\hline \hline
$\circledletter{c}$ failure & Data error  \\
\hline
$IZ$ & $IIIXI$ \\
$XZ$ & $IIXXI$ \\
$YZ$ & $IIYXI$ \\
$ZZ$ & $IIZXI$ \\
\hline \hline \\
\end{tabular}
}
}
\end{tabular}
\caption{
Flagged syndrome extraction for the $\llbracket 5,1,3 \rrbracket$ code.  
(a) Stabilizers and logical operators.  
(b) Circuit to extract the syndrome of the $XZZXI$ stabilizer into an ancilla qubit.  This is not fault tolerant because a fault on the ancilla can spread to a weight-two error on the data.  
(c) This circuit also extracts the $XZZXI$ syndrome, and if a single fault spreads to a data error of weight $\geq 2$ then the $X$ measurement will return $\ket -$, flagging the failure.  
(d) The nontrivial data errors that can result from a single gate failure in (c) that triggers the flag; these errors are distinguishable by their syndromes and so can be corrected.  
}
\end{figure*}

\section{Two-qubit fault-tolerant error correction for the $\llbracket 5,1,3 \rrbracket$ code}

The perfect $\llbracket 5,1,3 \rrbracket$ code~\cite{LaflammeMiquelPazZurek96} has the stabilizer generators, and logical $X$ and $Z$ operators of \figref{f:513code}.  (For the basics of stabilizer algebra, quantum error-correcting codes and fault-tolerant quantum computation, we refer the reader to~\cite{Gottesman09faulttolerance}.)  

The syndrome for the first stabilizer, $XZZXI$, can be extracted by the circuit in \figref{f:513syndromeextraction}, where $Z$ indicates a $\ket 0, \ket 1$ measurement, and \raisebox{-.1cm}{\includegraphics[scale=.45]{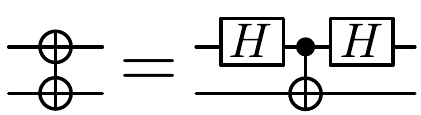}}$\;$.  However, this circuit is not fault tolerant.  For example, if the second gate fails and after it is applied an $IZ$ fault, then this fault will propagate through the subsequent gates to become an $IIZXI$ error on the data.  Thus a single fault can create a weight-two error on the data, and even a perfect error-correction procedure will correct this error in the wrong direction, creating the logical error $IIZXZ \sim \widebar X$.  

To fix this problem, we instead extract the $XZZXI$ syndrome using the circuit in \figref{f:513flaggedsyndromeextraction}.  With no faults, this circuit behaves exactly the same as that of \figref{f:513syndromeextraction}, and the $X$-basis ($\ket +, \ket -$) measurement will always give~$\ket +$.  The purpose of the $\ket +$ ancilla qubit, which we term a ``flag," is to detect gate faults that can propagate to correlated errors, weight two or higher, on the data.  Failures after gates $a$ and $d$ cannot create correlated errors.  The failures after gates $b$ and $c$ that can create correlated errors are listed in \figref{f:513correlatederrors}.  ($Y$ failures on the second qubit have the same effect on the data as $Z$ failures.)  These failures will all be detected, causing a $\ket -$ measurement outcome.  Moreover, observe that the seven distinct data errors have distinct, nontrivial syndromes.  

\begin{figure*}
\centering
\begin{tabular}{c@{$\;$}c@{$\!\!\!$}c@{$\;$}c}
\subfigure[\label{f:713code}]{\hspace{-.3cm}\raisebox{0cm}{
$\begin{array}{r c c c c c c c}
&I&I&I&X&X&X&X\\
&I&X&X&I&I&X&X\\
&X&I&X&I&X&I&X\\
&I&I&I&Z&Z&Z&Z\\
&I&Z&Z&I&I&Z&Z\\
&Z&I&Z&I&Z&I&Z\\
\cline{2-8}
\widebar X = &X&X&X&X&X&X&X\\
\widebar Z = &Z&Z&Z&Z&Z&Z&Z
\end{array}$}}
&
\subfigure[\label{f:713syndromeextraction}]{
\raisebox{-1.73cm}{\includegraphics[scale=.769]{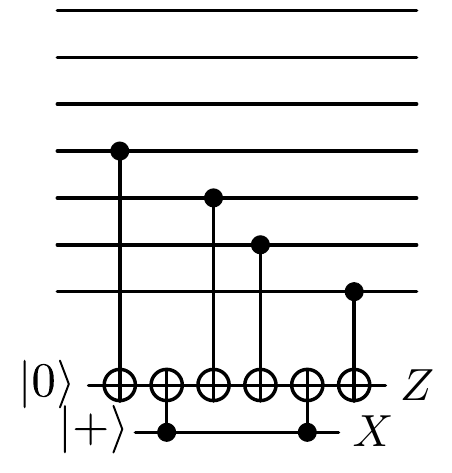}}
}
&
\subfigure[\label{f:1573code}]{
\raisebox{-.67cm}{
$\begin{tabular}{c c c c c c c c c c c c c c c}
0&0&0&0&0&0&0&1&1&1&1&1&1&1&1\\
0&0&0&1&1&1&1&0&0&0&0&1&1&1&1\\
0&1&1&0&0&1&1&0&0&1&1&0&0&1&1\\
1&0&1&0&1&0&1&0&1&0&1&0&1&0&1 \\
\\
\end{tabular}$
}}
&
\subfigure[\label{f:89101112131415ftsyndrome}]{
\raisebox{-1.73cm}{\includegraphics[scale=.769]{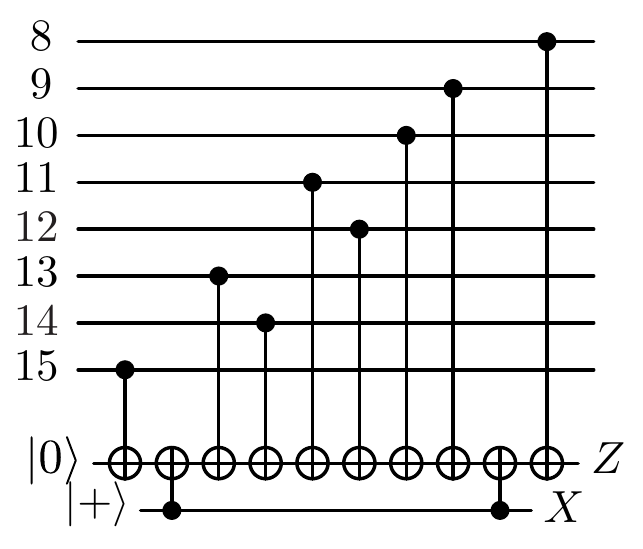}}
}
\end{tabular}
\caption{
Flagged syndrome extraction for Hamming codes.  
(a) $\llbracket 7,1,3 \rrbracket$ code stabilizers and logical operators.  
(b) Two-ancilla circuit to extract the $IIIZZZZ$ syndrome fault tolerantly.  
(c) $\llbracket 15,7,3 \rrbracket$ code parity checks.  
(d) Flagged circuit to extract the $Z_{\{8,\ldots,15\}}$ syndrome fault tolerantly.  
}
\end{figure*}

The complete error-correction procedure is given by: 

{ \noindent \hrulefill \\
\centering \textbf{Error-correction procedure} \\ } \smallskip
\begin{enumerate}[leftmargin=*]
\item Use the circuit of \figref{f:513flaggedsyndromeextraction} to extract the $XZZXI$ syndrome.  
\begin{enumerate}[leftmargin=.5cm]
\item If the flag qubit is measured as $\ket -$, then use the unflagged circuits analogous to \figref{f:513syndromeextraction} to extract all four syndromes.  Finish by applying the corresponding correction from among $IIIII$, $IIZXI$, $IXZXI$, $IYZXI$, $IZZXI$, $IIIXI$, $IIXXI$, $IIYXI$.  
\item Otherwise, if the syndrome is $-1$, i.e., the syndrome qubit is measured as $\ket 1$, then use unflagged circuits to extract \emph{all four} syndromes.  Finish by applying the corresponding correction of weight $\leq 1$.  
\end{enumerate}
\item (If the flag was not raised and the syndrome was trivial, then) Similarly extract the $IXZZX$ syndrome.  If the flag is raised, then use unflagged circuits to extract the four syndromes, and finish by applying the correction from among $IIIII$, $IIIIX$, $IXXII$, $IIIXX$, $XIIIY$, $IXIII$, $IIIZX$, $IIIYX$.  If the syndrome is nontrivial, then use unflagged circuits to extract the four syndromes, and finish by applying the correction of weight $\leq 1$.  
\item Similarly extract the $XIXZZ$ syndrome, and correct if the flag is raised or the syndrome is nontrivial.  
\item Similarly extract the $ZXIXZ$ syndrome, and correct if the flag is raised or the syndrome is nontrivial.  
\end{enumerate}
\vspace{-1\baselineskip}
\hrulefill
\medskip

We now argue that this procedure is fault tolerant.  
\begin{itemize}[leftmargin=*]
\item 
If there are no faults, then it appropriately corrects the data to the codespace.  
\item 
If the data lies in the codespace and there is at most one faulty gate in the error-correction procedure, then: 
\begin{itemize}[leftmargin=.5cm]
\item 
If all syndromes and flags are trivial, then the data can have at most a weight-one error.  (No correction is applied.)  
\item 
If a flag is raised or a syndrome is nontrivial, then the subsequent unflagged syndrome extractions are perfect, and suffice to correct either a possibly correlated error (if the flag is raised) or a weight $\leq 1$ error (if no flag is raised but the syndrome is nontrivial).  
\end{itemize}
\end{itemize}
Let us point out that when a syndrome extracted by a flagged circuit is nontrivial, then even if the flag is not raised we still extract all four syndromes (using unflagged circuits) before applying a correction.  This is because a fault on the data could have been introduced in the middle of syndrome extraction.  For example, if we extract the first syndrome as $+1$, but a $Z_1$ error is then added to the data, the remaining syndromes will be $+1, -1, +1$.  The correction for $(+, +, -, +)$ is $Z_3$, but were we to apply this correction the data would end up with error $ZIZII$.  This moral is that nontrivial syndromes cannot be trusted unless they are repeated.  

We give two-ancilla-qubit fault-tolerant state preparation and measurement circuits in \appref{s:513preparationmeasurement}.

\section{Two-qubit fault-tolerant error correction for Hamming codes}

The Hamming codes are a family of $\llbracket 2^r - 1, 2^r - 1 - 2r, 3 \rrbracket$ quantum error-correcting codes, for $r = 3, 4, 5, \ldots$.  They are self-dual perfect CSS codes.

\subsection{$\llbracket 7,1,3 \rrbracket$ Steane code}

The $\llbracket 7,1,3 \rrbracket$ code, known as Steane's code~\cite{Steane96css}, has the stabilizers and logical operators given in \figref{f:713code}.  

\begin{figure*}
\centering
\begin{tabular}{c@{$\quad\qquad$}c@{$\quad\qquad$}c}
\subfigure[]{\raisebox{-.23cm}{
$\begin{array}{r c c c c c c c c}
&X&X&X&X&X&X&X&X\\
&Z&Z&Z&Z&Z&Z&Z&Z\\
&I&I&Z&Y&X&Z&Y&X\\
&I&Z&X&I&X&Y&Z&Y\\
&I&X&I&Z&Z&X&Y&Y
\end{array}$}}
&
\subfigure[]{
$\begin{array}{r c c c c c c c c c c}
&X&X&X&X&X&X&X&X&X&X\\
&Z&Z&Z&Z&Z&Z&Z&Z&Z&Z\\
&X&Z&Z&X&I&Z&Y&Y&Z&I\\
&I&X&Z&Z&X&I&Z&Y&Y&Z\\
&X&I&X&Z&Z&Z&I&Z&Y&Y\\
&Z&X&I&X&Z&Y&Z&I&Z&Y
\end{array}$}
&
\subfigure[]{
$\begin{array}{r c c c c c c c c c c c}
&I&I&I&Z&Z&Y&Y&X&X&X&X\\
&I&I&I&X&X&Z&Z&Y&Y&Y&Y\\
&I&X&X&Z&Z&I&I&Y&Y&Y&Y\\
&I&Z&Z&Z&Z&Z&Z&Z&Z&Z&Z\\
&Z&X&I&I&Z&Z&X&I&Y&X&Z\\
&Y&I&Z&I&Y&Z&Y&Y&Z&X&I
\end{array}$}
\end{tabular}
\caption{
(a) Stabilizers for an $\llbracket 8,3,3 \rrbracket$ code.  One choice for the logical operators is $\protect\widebar X_1 = X_{\{4,5,7,8\}}$, $\protect\widebar Z_1 = Z_{\{2,5\}} X_{\{3,8\}}$, $\protect\widebar X_2 = X_{\{3,6\}} Z_{\{4,5\}}$, $\protect\widebar Z_2 = Z_{\{1,5,6,7\}}$, $\protect\widebar X_3 = Z_{\{1,2\}} X_{\{6,7\}}$, $\protect\widebar Z_3 = Z_{\{1,2,4,7\}}$.  
(b) Stabilizers for a $\llbracket 10,4,3 \rrbracket$ code, based on the $\llbracket 5,1,3 \rrbracket$ code~\cite{CalderbankRainsShorSloane96smallcodes}.  One choice for the logical operators is $\protect\widebar X_1 = X_8 Y_{10} Z_1$, $\protect\widebar Z_1 = X_{10} Y_7 Z_3$, $\protect\widebar X_2 = X_{10} Y_8 Z_2$, $\protect\widebar Z_2 = X_8 Y_6 Z_5$, $\protect\widebar X_3 = X_9 Y_7 Z_1$, $\protect\widebar Z_3 = X_4 Y_3 Z_7$, $\protect\widebar X_4 = X_9 Y_6 Z_2$, $\protect\widebar Z_4 = X_4 Y_5 Z_6$.  
(c) Stabilizers for an $\llbracket 11, 5, 3 \rrbracket$ code.  
} \label{f:othercodes}
\end{figure*}

The circuit in \figref{f:713syndromeextraction} extracts the $IIIZZZZ$ syndrome.  As in \figref{f:513syndromeextraction}, any single gate fault that leads to a data error of weight $\geq 2$ will also make the $X$ measurement output $\ket -$.  Moreover, if that measurement gives $\ket -$ with a single fault, the error's possible $Z$ components~are 
\begin{equation*}
\identity, \; Z_7, \; Z_6 Z_7, \; Z_5 Z_6 Z_7 \sim Z_4
\end{equation*}
These errors are distinguishable by their $X$ syndromes.  

Similar circuits work for the other stabilizers, thus giving a two-qubit fault-tolerant error-correction procedure.  

\subsection{$\llbracket 15,7,3 \rrbracket$ Hamming code}

The $r = 4$, $\llbracket 15,7,3 \rrbracket$ code has four $X$ and four $Z$ stabilizers each given by the parity checks of \figref{f:1573code}.  Observe that the columns form the numbers $1$ through $2^r - 1$, written in binary; this holds for each Hamming code.  

For a subset~$S$, let $Z_S = \prod_{j \in S} Z_j$.  The circuit of \figref{f:89101112131415ftsyndrome} 
works to fault-tolerantly extract the first $Z$ syndrome, for $Z_{\{8, \ldots, 15\}}$.  If the flag is triggered by a single fault, then the possible error $Z$ components are 
\begin{equation*}\begin{gathered}
\identity, Z_8, Z_{\{8, 9\}}, Z_{\{8, 9, 10\}}, Z_{\{8, 9, 10, 12\}}, Z_{\{8, 9, 10, 11, 12\}}, \\
Z_{\{8, 9, 10, 11, 12, 14\}}, 
Z_{\{8, 9, 10, 11, 12, 13, 14\}}
\end{gathered}\end{equation*}
Once again, these errors are distinguishable by their $X$ syndromes.  (The order of the CNOT gates is important.)  

\appref{s:1573statepreparation} gives a single-ancilla fault-tolerant circuit to prepare encoded $\ket{0^7}$.

\subsection{General Hamming codes}

\begin{claim} \label{t:hamming}
Syndromes for the $\llbracket 2^r - 1, 2^r - 1 - 2r, 3 \rrbracket$ Hamming code can be fault-tolerantly extracted with two qubits.  
\end{claim}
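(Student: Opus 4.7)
The plan is to generalize the constructions given above for the $\llbracket 7,1,3 \rrbracket$ and $\llbracket 15,7,3 \rrbracket$ codes. For each weight-$2^{r-1}$ stabilizer generator, I would build a flagged syndrome-extraction circuit using only two ancillas: a syndrome qubit receiving a CNOT from each qubit in the stabilizer support $S$, and a flag qubit entangled with the syndrome qubit via two CZ gates, positioned so that any $Z$ fault on the syndrome qubit during the intervening ``vulnerable'' window flips the flag into $\ket{-}$.

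First I would catalog the single-fault errors. With CNOTs applied in the order $j_1, j_2, \ldots, j_{2^{r-1}}$, a $Z$-propagating fault at step $k$ produces the data error $Z_{\{j_{k+1}, \ldots, j_{2^{r-1}}\}}$, which modulo the stabilizer itself is equivalent to $Z_{\{j_1, \ldots, j_k\}}$. Its $X$-syndrome is the partial XOR sum $s_k = \bigoplus_{i=1}^{k} v_{j_i}$, where $v_j \in \mathbb{F}_2^r$ is the binary representation of column index $j$ in the Hamming parity-check matrix. The key point is that, as long as these $s_k$ are distinct across flag-triggering positions, the possible errors are distinguishable by subsequent unflagged syndrome extractions, and the overall procedure is then fault tolerant by the same argument used for the $\llbracket 5, 1, 3 \rrbracket$ code above.

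The heart of the proof is therefore a combinatorial lemma: for each weight-$2^{r-1}$ stabilizer support $S$ of a Hamming code, there exists an ordering of its elements whose partial XOR sums $s_0, s_1, \ldots, s_{2^{r-1}-1}$ are all distinct (with $s_0 = s_{2^{r-1}} = 0$ both corresponding to the trivial data error, which is harmless). Since $S$ is an affine hyperplane in $\mathbb{F}_2^r$, the differences $v_{j_k} \oplus v_{j_{k+1}}$ range over an $(r-1)$-dimensional subspace, so there is substantial flexibility. I would prove existence either by adapting a Gray-code-style enumeration of the hyperplane, or by induction on $r$ using the recursive ``doubling'' structure of Hamming parity-check matrices, with the $r = 3, 4$ circuits already exhibited serving as base cases.

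The main obstacle will be verifying this ordering lemma in full generality and checking that the flag placement (which depends on the ordering) simultaneously satisfies two conditions: every single fault that can create a weight-$\geq 2$ data error must occur inside the flag window, and every fault inside the flag window must produce an error on the listed distinguishable list. Once these combinatorial facts are established, the rest of the analysis---that flag-triggered errors are correctable by re-extracting all syndromes with unflagged circuits, and that flag-untriggered single faults produce at most weight-one data errors---parallels the argument already laid out for the $\llbracket 5, 1, 3 \rrbracket$ code, so no new structural ideas are required.
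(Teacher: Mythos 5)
Your reduction of the claim to a combinatorial ordering lemma is exactly right, and matches the paper's framing: a fault at CNOT step $k$ leaves a $Z$ error on a suffix (equivalently, modulo the stabilizer, a prefix) of the support, its $X$-syndrome is the partial XOR sum of the column labels, and since the code is CSS it suffices that these partial sums be pairwise distinct (with the two zero sums at the ends corresponding to harmless errors). But the proof stops at precisely the point where the real work begins. The existence of an ordering of the $2^{r-1}$ support elements with all distinct partial sums is the entire nontrivial content of the claim, and you only gesture at two possible strategies (``a Gray-code-style enumeration'' or ``induction on $r$ using the recursive doubling structure'') without carrying either out. Neither is obviously workable: a Gray code controls consecutive \emph{differences}, not partial sums, and ordering a set so that all partial sums are distinct is a nontrivial sequencing problem for which no generic argument is offered. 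You yourself flag this as ``the main obstacle,'' which is an accurate self-assessment: the obstacle is the theorem.

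The paper resolves it with an explicit algebraic construction you are missing. For the stabilizer $Z_{\{2^{r-1},\ldots,2^r-1\}}$, whose labels are $2^{r-1}+m$ with $m$ ranging over all $(r-1)$-bit strings, fix a primitive polynomial $p(x)$ of degree $r-1$ over $\mathrm{GF}(2)$ and order the lower parts as $0, q_0(2), q_1(2), \ldots, q_{2^{r-1}-2}(2)$ where $q_j = x^j \bmod p$. The cumulative sums of the lower parts are then $0, 1, 1+x, \ldots, \sum_{i=0}^{j} x^i, \ldots$ reduced mod $p$, and two of these coincide only if $x^{j+1}+\cdots+x^k \equiv 0 \pmod p$, which (since $x$ is invertible mod $p$) forces $x^{k-j} \equiv 1$ for some $0 < k-j < 2^{r-1}-1$, contradicting primitivity. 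That one-line number-theoretic argument is what certifies the ordering for all $r$; without it, or a worked-out substitute, your proposal establishes the claim only for the $r=3,4$ instances already exhibited by explicit circuits.
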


\begin{proof}
Consider the stabilizer $Z_{\{2^{r-1}, \ldots, 2^r - 1\}}$.  As in Figs.~\ref{f:713syndromeextraction} and~\ref{f:89101112131415ftsyndrome}, we give a permutation of the last $2^{r-1}$ qubits so that, when the flag is triggered, the possible $Z$ errors have distinct syndromes.  

Let $p(x)$ be a degree-$(r-1)$ primitive polynomial over $\mathrm{GF}(2)$.  For $j = 0, \ldots, 2^{r-1} - 2$, let $q_j(x) = x^j \mod p$; these are distinct polynomials of degree $\leq r-2$.  Furthermore, the remainders of their cumulative sums, $\sum_{i=0}^j x^i$, are also distinct.  (Otherwise, if $0 \equiv x^{j+1} + \ldots + x^k = x^{j+1} (x^{k-j} + 1) / (x + 1)$, then considering the lowest-order terms give $x^{k-j} + 1 \equiv 0$, contradicting that $p$ is primitive.)  

The desired permutation is $2^{r-1}, q_0(2) + 2^{r-1}, q_1(2) + 2^{r-1}, \ldots, q_{2^{r-1} - 2}(2) + 2^{r-1}$.  
\end{proof}

\section{Two-qubit fault-tolerant \\ error correction for \\ other \mbox{distance-three codes}}

Figure~\ref{f:othercodes} presents $\llbracket 8,3,3 \rrbracket$, $\llbracket 10,4,3 \rrbracket$ and $\llbracket 11,5,3 \rrbracket$ codes~\cite{Gottesman96codes, Steane96codes, CalderbankRainsShorSloane96smallcodes}.  Many other distance-three codes are given in the code tables at~\cite{Grassl07codetable}.  

The general property required for our flagged procedure to work for measuring an operator is: for an operator $Z_1 Z_2 Z_3 \ldots Z_w$ (up to qubit permutations and local Clifford unitaries), the different errors $P_j Z_{j+1} \ldots Z_w$, for $P \in \{I, X, Y, Z\}$ and $j \in \{2, \ldots, w-1\}$, should have distinct and nontrivial syndromes.  (For a CSS code, for which $X$ and $Z$ errors can be corrected separately, it is enough that the syndromes for different errors be different for $P \in \{I, Z\}$.)  

We have verified this property for some qubit permutation of each of the stabilizer generators for the $\llbracket 10,4,3 \rrbracket$ and $\llbracket 11,5,3 \rrbracket$ codes from \figref{f:othercodes}.  For the $\llbracket 8,3,3 \rrbracket$ code, we have found permutations that work for each of the stabilizer generators except $X^{\otimes 8}$ and $Z^{\otimes 8}$.  However, we can replace these stabilizer generators with, respectively, $XXYZIYZI$ and $ZZIXYIXY$, which do satisfy the desired property.  For example, to extract the syndrome of $XXYZIYZI$, couple the qubits to the ancilla in the order $1, 2, 3, 6, 4, 7$.  (With the order $1, 2, 3, 4, 6, 7$, the errors $IIIIIYZI$ and $IXYZIYZI$ have the same syndrome.)  

Therefore, for all of these codes, two ancilla qubits are enough to fault-tolerantly extract the syndromes and apply error correction.

\section{Two-qubit fault-tolerant error detection for $\llbracket n,n-2,2 \rrbracket$ codes}

The idea of flagging faults that can spread badly is also useful for error-detecting codes.  

For even $n$, the $\llbracket n, n-2, 2 \rrbracket$ error-detecting code has stabilizers $X^{\otimes n}$ and $Z^{\otimes n}$, and logical operators $\widebar{X}_j = X_1 X_{j+1}$, $\widebar Z_j = Z_{j+1} Z_n$ for $j = 1, \ldots, n-2$.  

Observe that extracting a syndrome with a single ancilla qubit is not fault tolerant because, for example, a $Z$ fault at either location indicated with a ${\color{red} \bigstar}$ in \figref{f:6qubitextraction} results in an undetectable logical error.  With a flag qubit, the circuit is fault tolerant; any single fault is either detectable or creates no data error.  

\begin{figure}
\centering
\begin{tabular}{c@{$\!$}c}
\subfigure[\label{f:6qubitextraction}]{
\hspace{-.4cm}
\raisebox{-1.5cm}{\includegraphics[scale=.769]{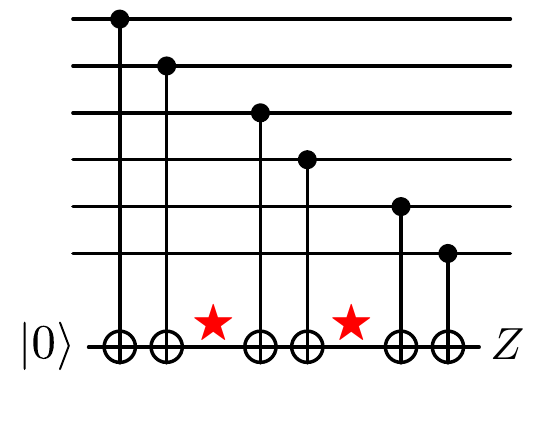}}
}
&
\subfigure[\label{f:6qubitflaggedextraction}]{
\raisebox{-1.5cm}{\includegraphics[scale=.769]{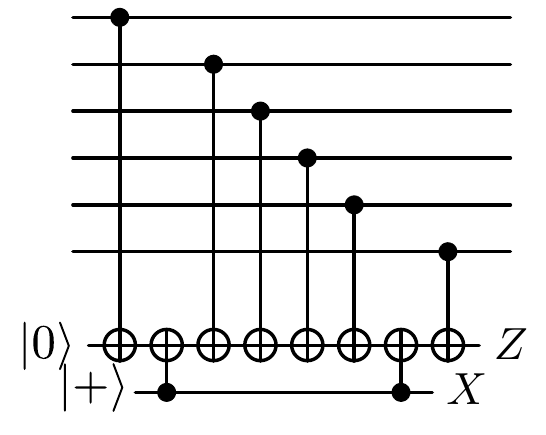}}
}
\end{tabular}
\caption{
(a) Circuit to extract the $Z^{\otimes n}$ syndrome. 
(b) Adding a flag makes it fault tolerant.   
}
\end{figure}

This approach works for any~$n$.  One way of interpreting it is that the ancilla is encoded into the two-qubit $Z$-error detecting code, with stabilizers $XX$ and logical operators $\widebar X = XI$ and $\widebar Z = ZZ$.  This code detects the single $Z$ faults that can propagate back to the data.  

In \appref{s:errordetectionpreparationmeasurement} we give single-ancilla fault-tolerant circuits for initialization and projective measurement.

\section{Simulations and conclusion}

In order to get a sense for the practicality of the two-qubit error-correction schemes, we simulate error correction using a standard depolarizing noise error model on the one- and two-qubit operations~\cite{Knill05}.\footnote{The source code is at \url{https://github.com/chaor11/twoqubitec}.}  
Figure~\ref{f:simulationec} shows the results from simulating at least $10^6$ consecutive error-correction rounds for each value of the CNOT failure rate~$p$.  Observe that for the $\llbracket 15,7,3 \rrbracket$ code, Steane-style error correction, which extracts all stabilizer syndromes at once, performs better than either the Shor-style or two-qubit procedures.  For the $\llbracket 5,1,3 \rrbracket$ and $\llbracket 7,1,3 \rrbracket$ codes, the different error-correction methods are all very close.  Note that we do not introduce memory errors on resting qubits, and nor do we add errors for moving qubits into position to apply the gates; we leave more detailed simulations and optimizations to future work. 
 
\begin{figure}
\centering
\hspace{-.2cm}
\includegraphics[scale=.46]{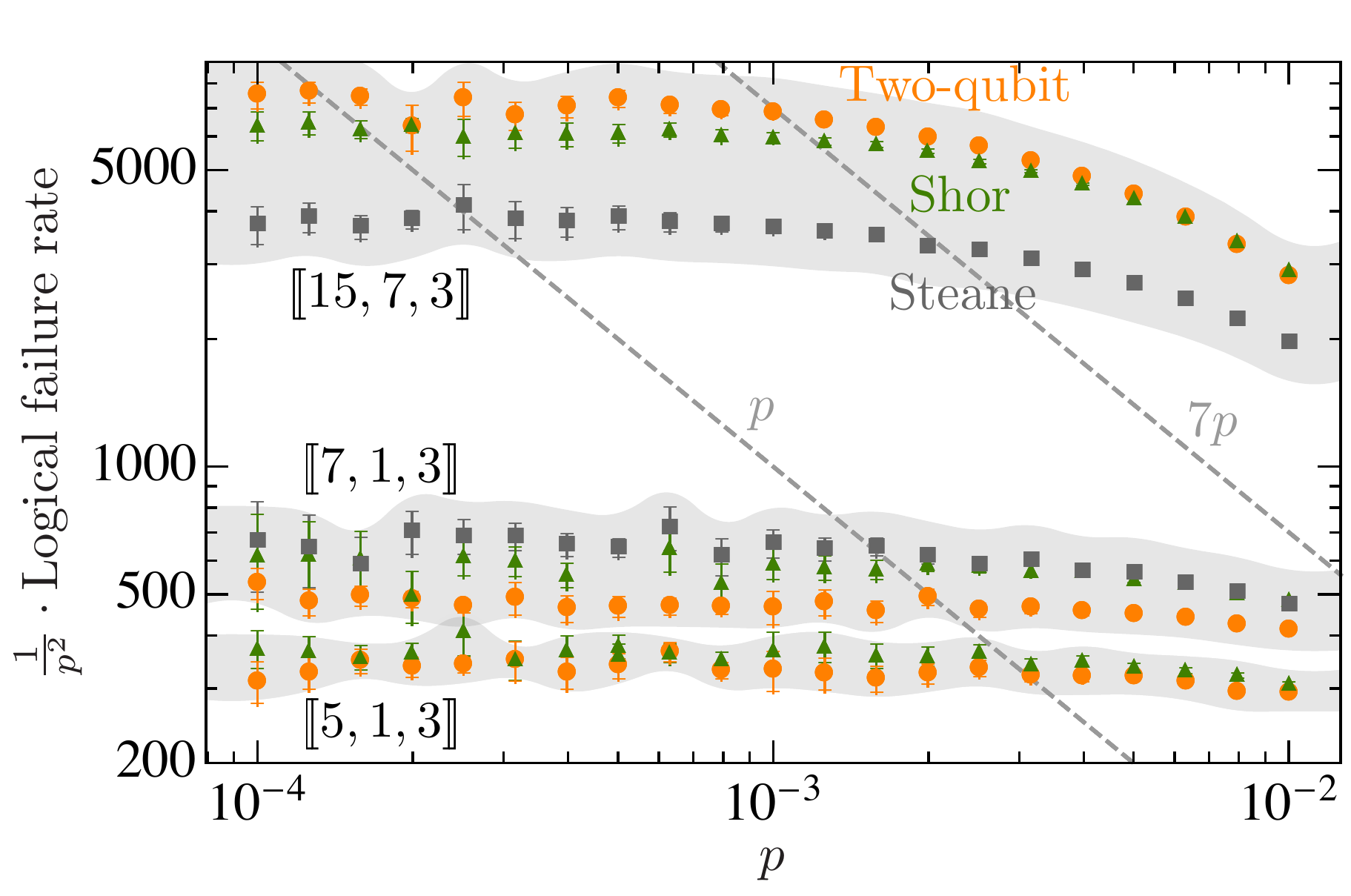}
\caption{Comparison to previous error-correction schemes.  Two-qubit error correction (\protect\scalebox{3.5}{\raisebox{-.05cm}{\color{Tangerine}$\hspace{-.02cm}\cdot\hspace{-.02cm}$}}) slightly outperforms Shor-style correction~(\protect\scalebox{.6}{\raisebox{.05cm}{\color{Fern}$\hspace{-.02cm}\blacktriangle\hspace{-.02cm}$}}) for the $\llbracket 5,1,3 \rrbracket$ code, and both Shor- and Steane-style (\protect\scalebox{.5}{\raisebox{.05cm}{\color{Steel}$\hspace{-.02cm}\blacksquare\hspace{-.02cm}$}}) error correction for the $\llbracket 7,1,3 \rrbracket$ code.  For the $\llbracket 15,7,3 \rrbracket$ code, Steane's method performs best.  Logical error rates are plotted divided by $p^2$ to reveal leading-order coefficients, and $p$ and $7 p$ are plotted to help judge pseudo-thresholds.} \label{f:simulationec}
\end{figure}

We have focused on extracting syndromes using two ancilla qubits in order to minimize the qubit overhead.  With just one more qubit, however, fault-tolerant syndrome extraction becomes considerably simpler.  Consider for example the circuit in \figref{f:weight42flags} for extracting a $ZZZZ$ syndrome.  

\begin{figure}
\raisebox{-1.7cm}{\includegraphics[scale=.769]{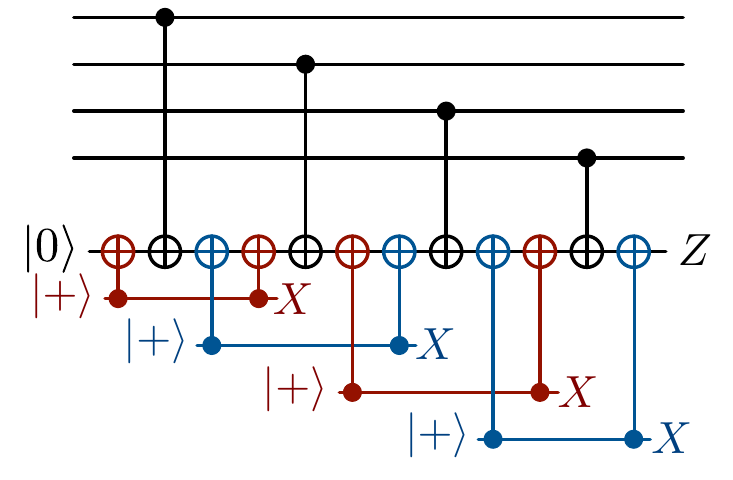}}
\caption{Overlapping flags can closely localize $Z$ faults.} \label{f:weight42flags}
\end{figure}

Every CNOT gate into the syndrome qubit has its own flag, to catch $Z$ faults that can lead to correlated $Z$ errors on the data.  The flags allow for closely localizing any fault, thereby easing error recovery.  For example, if there is a single fault and only the second flag above is triggered, then the $Z$ error on the data can be $IIII$, $IZZZ$ or $IIZZ$.  The regions covered by the flags overlap so that no $Z$ fault is missed.  This technique is most effective if qubit initialization and measurement is fast.  Less extreme versions of the technique, in which some gates share flags, can also be used.  In~\cite{ChaoReichardt17automorphisms} we use variants of this technique to apply operations fault tolerantly to data in a block code, with little qubit overhead.  

\begin{figure}
\centering
\begin{tabular}{c}
\subfigure[\label{f:713flagextract2syndromes}]{\raisebox{-1.7cm}{\includegraphics[scale=.769]{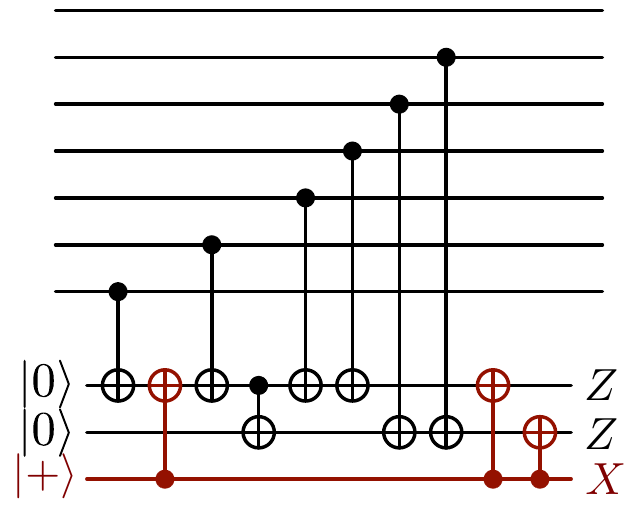}}} \\
\subfigure[\label{f:713flagextract3syndromes}]{\raisebox{-1.7cm}{\includegraphics[scale=.769]{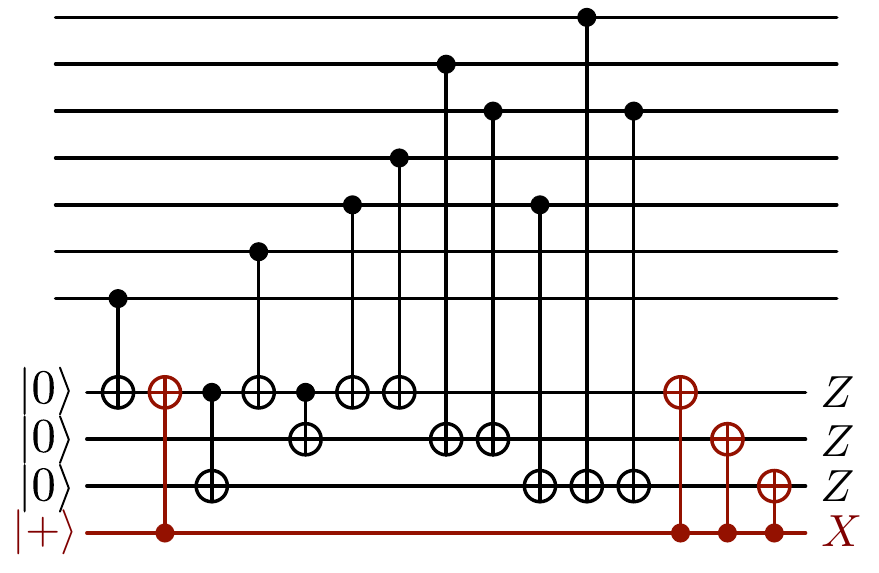}}}
\end{tabular}
\caption{Extracting multiple syndromes with a shared flag.}
\end{figure}

\begin{figure*}
\centering
\begin{tabular}{c@{$\;$}c@{$\;$}c}
\subfigure[\label{f:halfdasyndrome6}]{
\raisebox{-1.73cm}{\includegraphics[scale=.769]{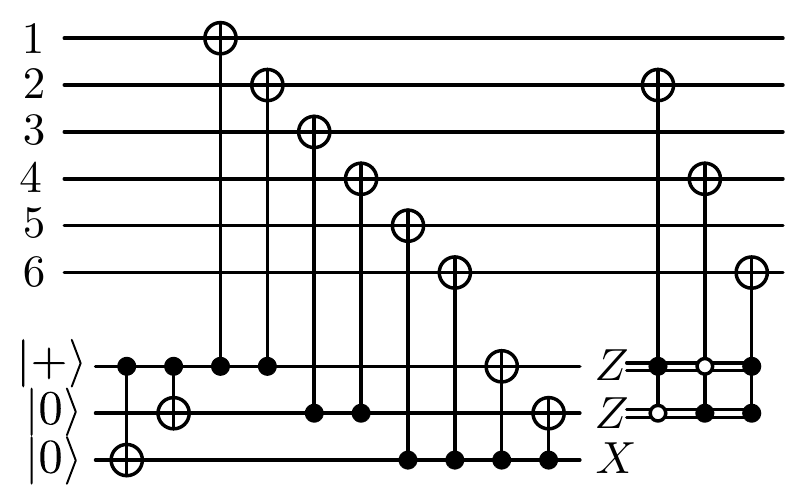}}
}
&
\subfigure[\label{f:halfdasyndrome10}]{
\raisebox{-1.73cm}{\includegraphics[scale=.769]{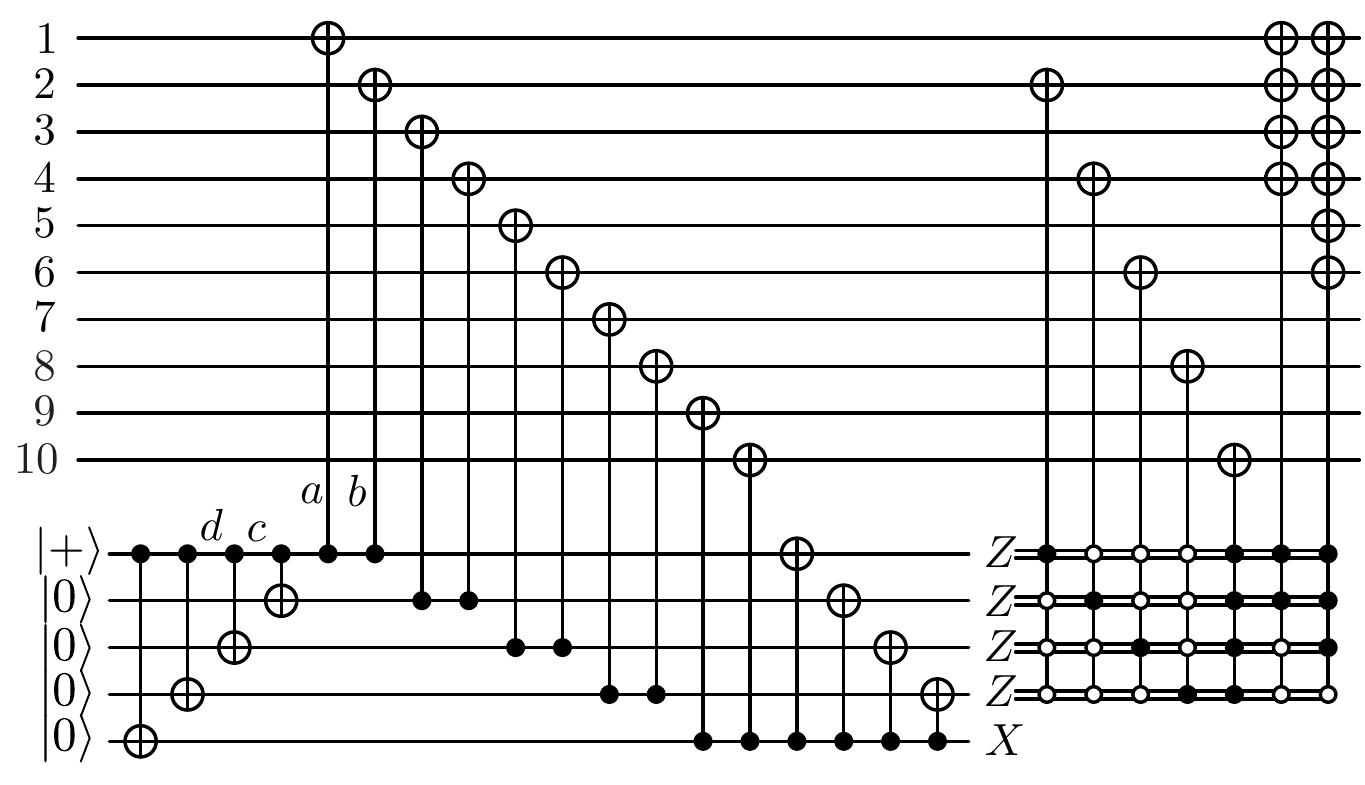}}
}
\end{tabular}
\caption{Stephens-Yoder-Kim syndrome extraction, for a distance-three code.  Up to two data qubits are coupled to each ancilla.}
\end{figure*}

For a large code with many stabilizers, it can be inefficient to extract the syndromes one at a time.  The circuit in \figref{f:713flagextract2syndromes} extracts two of the $\llbracket 7,1,3 \rrbracket$ code's syndromes at once, using a shared flag.  Figure~\ref{f:713flagextract3syndromes} uses a shared flag to extract all three $Z$ syndromes.  A single fault can lead to at most a weight-one $X$ error and, if the flag is not triggered, a weight-one $Z$ error.  If the flag is triggered, the gates are arranged so that the different $Z$ errors are distinguishable.  The circuit uses four qubits and $15$ CNOT gates, versus seven qubits and $25$ CNOT gates for Steane-style extraction with ancilla decoding~\cite{DiVincenzoAliferis06slow}; however the syndrome $001$ can occur with errors $\identity$, $Z_1$, $Z_3$ or $Z_5$ and so, unlike in Steane's scheme, must be verified before applying a correction.  

The design space for fault-tolerant error correction thus expands considerably with more allowed qubits.  A natural problem is to extend the flag technique to medium-size codes of higher distance.

\subsection*{Acknowledgements}

We thank Qian Yu for suggesting the construction of \claimref{t:hamming}, and thank Ted Yoder, Earl Campbell and James Wootton for helpful comments.  Research supported by NSF grant CCF-1254119 and ARO grant W911NF-12-1-0541.  

\appendix

\section{Stephens-Yoder-Kim space-optimized syndrome extraction} \label{s:decodedhalfcatsyndromeextraction}

For a distance-three code, Shor's method~\cite{Shor96} uses $w + 1$ ancilla qubits to fault-tolerantly extract the syndrome of a weight-$w$ stabilizer, by preparing and verifying a $w$-qubit cat state $\tfrac{1}{\sqrt 2}(\ket{0^w} + \ket{1^w})$.  For example, to extract an $XXXX$ syndrome, $w = 4$, run the following circuit:  
\begin{equation*}
\raisebox{-1.2cm}{\includegraphics[scale=.769]{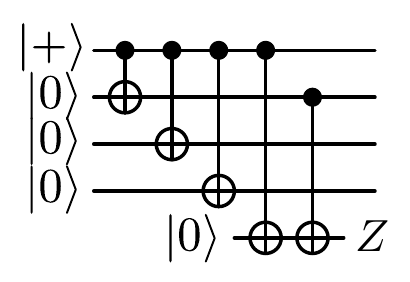}}
\end{equation*}
The $Z$ measurement is to catch correlated $X$ errors.  If it returns $\ket 0$, apply four CNOT gates into the data qubits and measure each ancilla in the $X$ basis.  

For $w = 4$ and $w = 7$, DiVincenzo and Aliferis~\cite{DiVincenzoAliferis06slow} give circuits that use unverified $w$-qubit cat states, that interact with the data without waiting for a measurement to finish.  Instead of measuring the cat state immediately after coupling it to the data, they first apply a carefully designed decoding circuit.  They conjecture the technique extends to arbitrary $w$.  

In fact, one can do better.  Stephens~\cite{Stephens14colorcodeft} has proposed that for $w \leq 8$, four ancilla qubits are enough to fault-tolerantly extract the syndrome, and Yoder and Kim~\cite{YoderKim16trianglecodes} have shown that three qubits are enough for $w = 4$.  We will show that in general, for a distance-three code, $\max\{3, \lceil w/2 \rceil\}$ qubits suffice to fault-tolerantly extract a weight-$w$ stabilizer's syndrome.  Unlike our two-qubit schemes, this method can also be ``deterministic"~\cite{CrossDiVincenzoTerhal09codes}.  

The basic idea is to follow the DiVincenzo-Aliferis method, except coupling two data qubits to each ancilla qubit; with appropriate corrections, this remains fault tolerant.  Figure~\ref{f:halfdasyndrome6} sketches the technique for the minimum number, three, of ancilla qubits, and \figref{f:halfdasyndrome10} shows the construction for $w = 10$.  The general case follows a similar pattern.  

\begin{figure*}
\centering
\begin{tabular}{c@{$\quad$}c}
\subfigure[\label{f:513xpreparation}]{
\begin{minipage}[b][1.63in]{2.82in}  
\raisebox{.115cm}{\includegraphics[scale=.769]{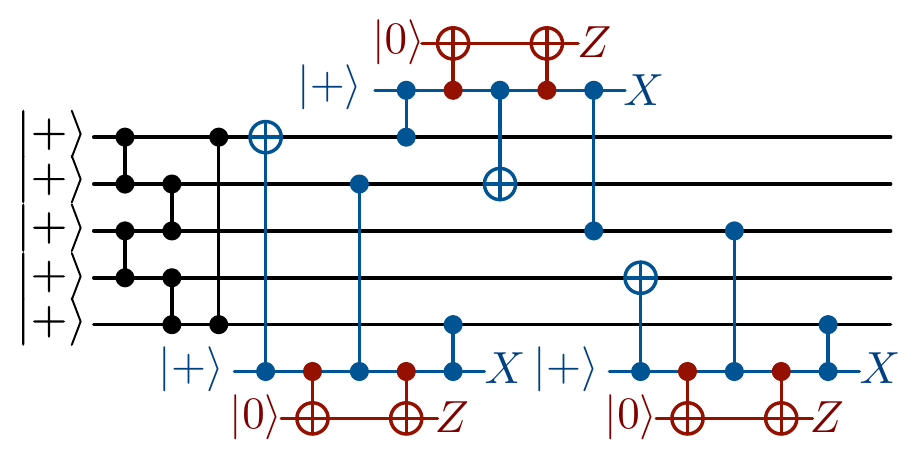}}
\end{minipage}
}
&
\subfigure[\label{f:513xmeasurement}]{
\begin{minipage}[b][1.63in]{3.97in}
\raisebox{0cm}{\includegraphics[scale=.769]{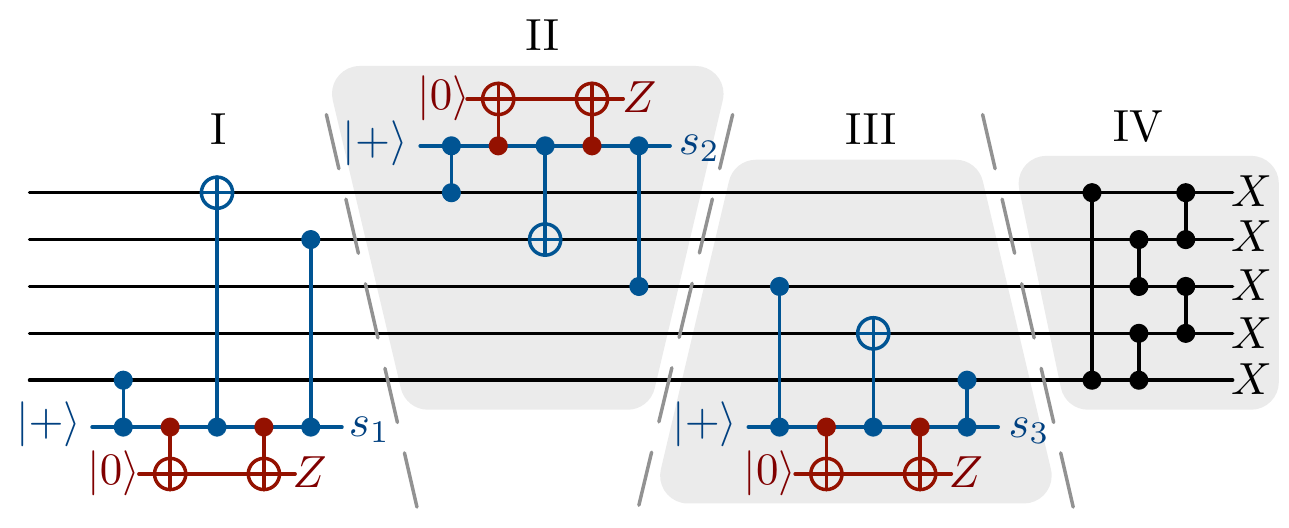}}
\end{minipage}
}
\end{tabular}
\caption{Fault-tolerant circuits to (a) prepare encoded $\ket +$, and (b) measure logical $X$, for the $\llbracket 5,1,3 \rrbracket$ code.  In (a) the measurement outcomes should all be trivial, $\ket +$ or $\ket 0$, for the state to be acceptable.  In (b), portions of the circuit may be skipped depending on the measurement results.}
\end{figure*}

Observe that in \figref{f:halfdasyndrome10} a failure of gate a can propagate to a data error $P_1 X_2$ for any $P \in \{I, X, Y, Z\}$.  A failure of gate b can create a $P_2$ data error.  The $X_2$ correction, applied when the $Z$ measurements output $1000$, ensures that at the end the data has an error of weight $\leq 1$.  The corrections for $Z$ measurements $0100$, $0010$, $0001$ and $1111$ are similar.  Failures at locations $c$ and~$d$ can cause data errors of weight four or six, respectively, but these are caught and corrected for with $Z$ measurements $1100$ or $1110$.  It is important that after the coupling to the data no single fault can cause the measurements to give $1100$ or $1110$.

\section{Fault-tolerant state preparation and measurement for the $\llbracket 5, 1, 3 \rrbracket$ code} \label{s:513preparationmeasurement}

\subsection{State preparation}

The black portion of the circuit in \figref{f:513xpreparation} prepares, up to Paulis, encoded $\ket +$, the five-cycle graph state with stabilizers $ZXZII$ and its cyclic permutations.  Then three of the syndromes are measured.  Provided that there is at most one gate fault, and the syndrome (blue) and flag (red) measurements are trivial, the output has at most a weight-one error.  One can apply transversal Clifford operators to obtain encoded $\ket 0$.

\subsection{Measurement}

Although logical $Z$ is transversal, $\widebar X = XXXXX$, it is not fault tolerant to measure $X$ transversally.  The problem is that, as the $\llbracket 5,1,3 \rrbracket$ code is not CSS, the $X$ measurements do not give sufficient information to correct errors.  Fortunately, we can apply the same flag trick that we used for error correction.  We give a two-ancilla-qubit fault-tolerant circuit for destructively measuring $\widebar X$.  

\begin{enumerate}[leftmargin=*]
\item 
Let $s_1$ be the result of measuring $XZIIZ$ using a flag, as in part I of \figref{f:513xmeasurement}.  

If the flag is raised, skip to part IV.  Apply the five-$\CZ$ decoding circuit, measure the qubits and decode to get the answer.  (Given that the flag was raised, the data error can be $IIIII$, $IZIII$, $XZIII$, $YZIII$ or $ZZIII$ before decoding, and $IIIII, IZIII, XIIIZ, YIIIZ$ or $ZZIII$ after.  These errors can be distinguished and corrected for.)  

\item 
Measure $ZXZII$ with a flag, and call the result $s_2$, as in part II of \figref{f:513xmeasurement}.  

If the flag is raised, then return $s_1$.  

If the flag is not raised, and $s_2 \neq s_1$, then there has been a fault, which can mean a data error of weight $\leq 1$.  Skip to part IV: decode, measure and correct the result for the $16$ possible errors $IIIII, XIIII, \ldots, IIIIZ$.  

\item 
Measure $IIZXZ$ with a flag and call the result $s_3$, as in part III.  

If the flag is raised, then return $s_1 = s_2$.  

Otherwise, if $s_3 = s_1 = s_2$ then return this value.  

If $s_3 \neq s_1 = s_2$, then in part IV decode, measure and correct the result for the $16$ errors of weight $\leq 1$.  
\end{enumerate}

This procedure is fault tolerant.  For example, if the flag is raised while extracting $s_2$, then, provided the circuit has only one fault, $s_1$ must have been correct.

\section{Fault-tolerant state preparation \\ for the $\llbracket 15, 7, 3 \rrbracket$ code} \label{s:1573statepreparation}

Figure~\ref{f:1573encoding} gives a fault-tolerant circuit to prepare $\ket{0^7}$ encoded in the $\llbracket 15, 7, 3 \rrbracket$ code.  We have condensed CNOT gates with the same control wire.  The state is accepted provided that the measurement returns~$\ket 0$.  Note that since the code is perfect CSS, we need not worry about $Z$ faults propagating and only need to check for $X$ faults.  (Any $Z$ error is equivalent to either the identity or a weight-one error.)  

\begin{figure}
\raisebox{-3cm}{\includegraphics[scale=.769]{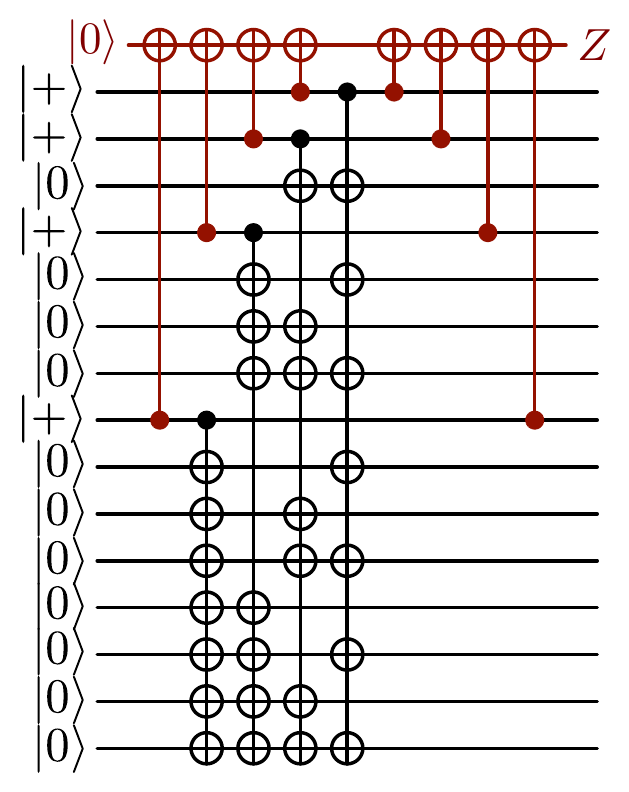}}
\caption{CSS fault-tolerant preparation of encoded $\ket{0^7}$.} \label{f:1573encoding}
\end{figure}

The circuit is fault tolerant only for $X$ and $Z$ faults considered separately.  For example, if the $\text{CNOT}_{1,15}$ gate fails as $ZX$, the state will be accepted with the weight-two error $Z_1 X_{15}$---but as the $X$ and $Z$ parts of this error each have weight one, the error is correctable.  

Other encoded computational basis states can be prepared by applying logical $X$ operators to $\ket{0^7}$.

\section{Fault-tolerant state preparation and measurement for the $\llbracket n, n-2, 2 \rrbracket$ codes} \label{s:errordetectionpreparationmeasurement}

We give one-qubit fault-tolerant circuits for state preparation and projective measurement, for the $\llbracket n, n-2, 2 \rrbracket$ codes.

\subsection{State preparation}

Encoded $\ket{0^{n-2}}$ is a cat state $\tfrac{1}{\sqrt 2}(\ket{0^n} + \ket{1^n})$.  It can be prepared fault tolerantly by 
\begin{equation} \label{e:prepare0}
\raisebox{-1.4cm}{\includegraphics[scale=.769]{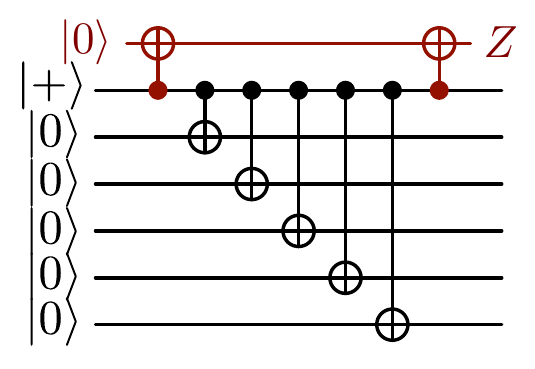}}
\end{equation}
where the $Z$ measurement should return~$\ket 0$.  

Encoded states $\ket{+^j 0^{n-2-j}}$ can be prepared by applying targeted logical Hadamard gates to encoded $\ket{0^{n-2}}$.  However, it is easier to prepare them directly.  
For $j$ odd, encoded $\ket{+^j 0^{n-2-j}}$ is $\tfrac12 (\ket{+^{j+1}} + \ket{-}^{j+1}) \otimes (\ket{0^{n-j-1}} + \ket{1^{n-j-1}})$, i.e., the tensor product of cat and dual cat states.  The states can be prepared separately as in~\eqnref{e:prepare0}.  The following circuit for $n = 6$, $j = 2$ generalizes naturally to any even~$j$: 
\begin{equation}
\raisebox{-1.4cm}{\includegraphics[scale=.769]{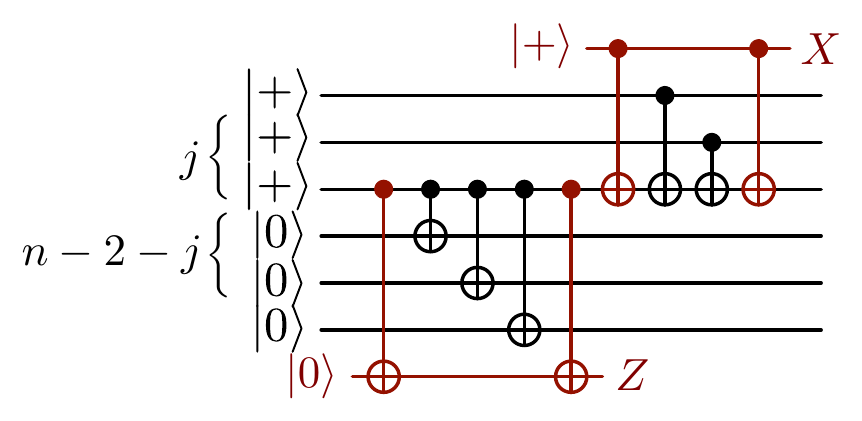}}\vspace{0cm}
\end{equation}
The black portion prepares the state, while the red parts catch faults using postselected $Z$ and $X$ measurements.

\subsection{Measurement}

The qubits can all be simultaneously be measured in the $X$ or $Z$ logical bases by transversal $X$ or $Z$ measurement.  To projectively measure just $\widebar Z_j$ fault tolerantly, measure it twice using one ancilla:  
\begin{equation}
\raisebox{-.7cm}{\includegraphics[scale=.769]{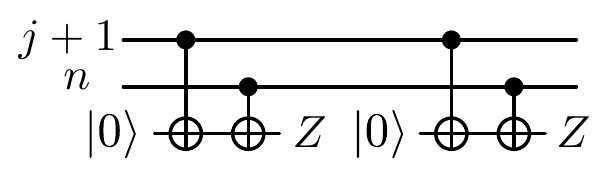}}
\end{equation}
Different results indicate an error.  
Similar circuits work for measuring $\widebar Z_i \widebar Z_j = Z_{i+1} Z_{j+1}$ for $i \neq j$, and for measuring $\prod_j \widebar Z_j = Z_1 Z_n$ and $\prod_{j \neq i} \widebar Z_j = Z_1 Z_{i+1}$.  To projectively measure other parities, use a two-ancilla circuit as in \figref{f:6qubitflaggedextraction}.  Of course, symmetrical circuits work for measuring logical $X$ operators.

\bibliography{q}

\end{document}